\newtheorem{theorem}{Theorem}
\newtheorem{corollary}[theorem]{Corollary}
\newtheorem{lemma}[theorem]{Lemma}
\DeclareDocumentCommand\R{}{\mathbb{R}}
\DeclareDocumentCommand\Z{}{\mathbb{Z}}
\DeclareDocumentCommand\N{}{\mathbb{N}}
\DeclareDocumentCommand\bincoeff{mm}{\genfrac{(}{)}{0pt}{}{#1}{#2}}
\DeclareDocumentCommand\setdef{mo}{\left\{#1\IfNoValueTF{#2}{}{ \mid #2}\right\}}
\DeclareDocumentCommand\conv{o}{\operatorname{conv}\IfValueTF{#1}{\left(#1\right)}{}}
\DeclareDocumentCommand\onevec{o}{\IfNoValueTF{#1}{\mathbbm{1}}{\mathbbm{1}_{#1}}}
\DeclareDocumentCommand\orderO{m}{\mathcal{O}\left(#1\right)}
\title{Parity Polytopes and Binarization}
\author[1]{Dominik Ermel}
\affil[1]{Otto-von-Guericke-Universität Magdeburg, dominik.ermel@st.ovgu.de}
\author[2]{Matthias Walter}
\affil[2]{RWTH Aachen University, walter@or.rwth-aachen.de}
\begin{document}

\maketitle

\begin{abstract}
  We consider generalizations of parity polytopes whose
  variables, in addition to a parity constraint,
  satisfy certain ordering constraints.
  More precisely, the variable domain is partitioned into $k$
  contiguous groups, and within each group, we require $x_i \geq x_{i+1}$
  for all relevant $i$.
  Such constraints are used to break symmetry
  after replacing an integer variable by a sum of binary variables, so-called binarization.
  We provide extended formulations for such polytopes,
  derive a complete outer description, and present a separation algorithm
  for the new constraints.
  It turns out that applying binarization and only enforcing parity
  constraints on the new variables is often a bad idea.
  For our application, an integer programming model for the graphic traveling salesman problem,
  we observe that parity constraints do not improve the dual bounds,
  and we provide a theoretical explanation of this effect.
\end{abstract}

\pagebreak[3]
\section{Introduction}
\label{SectionIntroduction}

\DeclareDocumentCommand\Pord{m}{P_{\text{ord}}^{#1}}
\DeclareDocumentCommand\Xord{m}{X_{\text{ord}}^{#1}}

The term \emph{binarization} refers to techniques to reformulate mixed-integer linear programs by replacing general integer (bounded) variables
by sets of binary variables.
\citeauthor{Roy07} compared several approaches with the goal to separate strong cutting planes in the reformulation and project it back~\cite{Roy07}.
Recently,~\citeauthor{BonamiM14} continued this line of work using simple split disjunctions of ranks~1 and~2 to generate cutting planes~\cite{BonamiM14}.
The strongest bounds were obtained by a binarization in which an integer variable $z \in \{0,1,\dotsc,n\}$ is replaced by the sum of $n$
binary variables $x_1, \dotsc, x_n$, and symmetry-breaking constraints
\begin{gather}
  \label{EquationBinarizationSymmetry}
  1 \geq x_1 \geq \dotsc \geq x_n \geq 0
\end{gather}
are added.
We denote by $\Xord{n}$ the set of ordered binary vectors $x \in \{0,1\}^n$ satisfying 
\eqref{EquationBinarizationSymmetry} and by $\Pord{n}$ their convex hull.
Note that Inequalities~\eqref{EquationBinarizationSymmetry} already describe $\Pord{n}$
since the corresponding matrix is totally unimodular
(every row has at most one $+1$ and at most one $-1$, see Theorem~19.3 in~\cite{Schrijver86})
and the right-hand side is integral.
We call the vectors in $\Xord{n}$ \emph{ordered binary vectors}.

\DeclareDocumentCommand\Peven{m}{P_{\text{even}}^{#1}}

In this article we consider the strong binarization in combination with parity constraints on the affected integer variables.
For this we define the \emph{ordered even parity polytope} for a vector $r \in \N^k$ as
\begin{gather*}
  \Peven{r} := \conv\{ (x^{(1)}, \dotsc, x^{(k)}) \in \Xord{r_1} \times \dotsb \times \Xord{r_k} \mid \sum_{i=1}^k \sum_{j=1}^{r_i} x^{(i)}_j \text{ even} \} .
\end{gather*}
The integer points in $\Peven{r}$ are precisely those binary vectors of length $n := r_1 + \dotsb + r_k$
whose entry-wise sum is even and which are ordered within each of the groups defined by $r$.
In the special case of $r = \onevec[n]$ (the all-ones vector), $\Peven{r}$ is the
even parity polytope, which has the following outer description~\cite{Jeroslow75} (with $[n] := \setdef{1,2,\dotsc,n}$):
\begin{gather}
  \label{InequalityParity}
  \Peven{\onevec[n]} = \{ x \in [0,1]^n \mid \sum_{i \in [n] \setminus F} x_i + \sum_{i \in F} (1-x_i) \geq 1 \text{ for all $F \subseteq [n]$ with $|F|$ odd} \}
\end{gather}

\paragraph{Outline.}
We start by describing an extended formulation for $\Peven{r}$ for arbitrary $r \in \N^k$,
which implies that the associated optimization problem can be solved in polynomial time (in the dimension of $\Peven{r}$).
In Section~\ref{SectionOuterDescription} we state and prove the main result of this paper, a complete outer description for $\Peven{r}$ in the original space.
After transferring the result to odd parities in Section~\ref{SectionOddParities}
we derive in Section~\ref{SectionSeparation} a simple linear-time algorithm that solves the respective separation problems.
Section~\ref{SectionGTSP} is dedicated to an application in which binarization is applied
to an integer programming model for the graphic traveling salesman problem.
During computations for this model we observed that the corresponding parity constraints
did not improve the dual relaxation value.
In Section~\ref{SectionCheating} we provide a theoretical explanation of this effect.

\pagebreak[3]
\section{Extended Formulations}
\label{SectionExtendedFormulation}

\DeclareDocumentCommand\DoutArcs{}{\delta^{\text{out}}}
\DeclareDocumentCommand\DinArcs{}{\delta^{\text{in}}}
\DeclareDocumentCommand\Pflow{mmm}{P_{\text{flow}}^{{#2}\text{-}{#3}}(#1)}

In this section we review well-known extended formulations of parity polytopes based on dynamic programming,
and adapt them to the ordered case.
The linear-size formulation for $\Peven{\onevec[n]}$ introduced by~\citeauthor{CarrK05}~\cite{CarrK05} works as follows.
We define a digraph $D = (V,A)$ with nodes $V := \setdef{0,1,2,\dotsc,n} \times \setdef{0,1} \setminus \setdef{ (0,1), (n,1) }$
and arcs $A := \setdef{ ((i-1,\alpha),(i,\beta)) \in V \times V }[ i \in [n], \alpha,\beta \in \setdef{0,1} ]$.
We call the nodes $s := (0,0)$ the source and $t := (n,0)$ the sink.
Clearly, the set $\Pflow{D}{s}{t}$ of $s$-$t$-flows $y \in \R_+^A$ of flow-value $1$ in any such acyclic digraph is described by
the linear constraints
\begin{alignat*}{10}
  y(\DoutArcs(v)) - y(\DinArcs(v))      &= 0    \quad\text{ for all } v \in V \setminus \setdef{s,t} \\
  y(\DoutArcs(s))                       &= 1    \\
  y(\DinArcs(t))                        &= 1    \\
  y_a                                   &\geq 0 \quad\text{ for all } a \in A,
\end{alignat*}

where we denote by $\DoutArcs(\cdot)$ resp.\ $\DinArcs(\cdot)$ the set of outgoing resp.\ incoming arcs.
Furthermore, again due to the total unimodularity of $D$'s incidence matrix (see Application~19.2 in~\cite{Schrijver86}),
$\Pflow{D}{s}{t}$ is an integral polytope.
We claim that the following map projects $\Pflow{D}{s}{t}$ onto $\Peven{\onevec[n]}$:
\begin{gather*}
  \pi : \R^A \to \R^n \qquad\text{ with }\qquad \pi(y)_i :=
  \begin{cases}
    y_{((0,0),(1,1))} & \text{ for } i = 1 \\
    y_{((i-1,0),(i,1))} + y_{((i-1,1),(i-1,0))} & \text{ for } 2 \leq i \leq n-1 \\
    y_{((n-1,1),(n,0))} & \text{ for } i = n
  \end{cases}
\end{gather*}
It is not hard to see that the integer points in $\pi(\Pflow{D}{s}{t})$ are precisely the binary vectors with an even number of $1$'s,
since an $s$-$t$-path in $D$ must use an even number of diagonal arcs.
The claim follows because $\Pflow{D}{s}{t}$ is integral and because $\pi$ maps integral points to integral points.
Figure~\ref{FigureParityFlowExtension} illustrates the constructed network and the projection (denoted by $r_i = 1$).

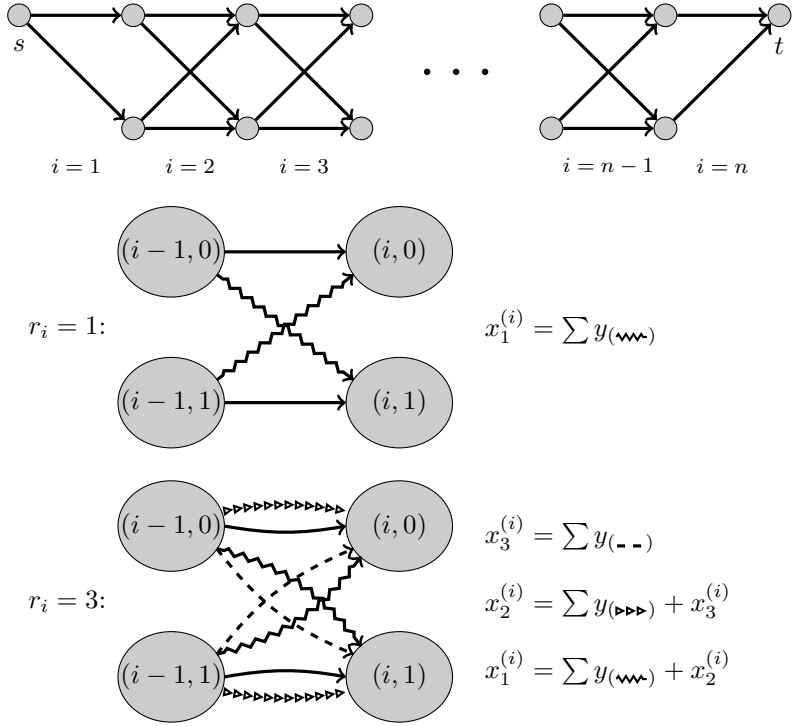
\begin{figure}[ht]
  \begin{center}
    \begin{tikzpicture}
      \piInput{tikz-extended-formulations-ordered-parity}[job=network]
    \end{tikzpicture}

    \vspace{1mm}

    \begin{tikzpicture}
      \piInput{tikz-extended-formulations-ordered-parity}[job=default-level]
    \end{tikzpicture}

    \vspace{1mm}

    \begin{tikzpicture}
      \piInput{tikz-extended-formulations-ordered-parity}[job=ordered-level]
    \end{tikzpicture}
  \end{center}
  \caption{Flow-Extensions of (Ordered) Even Parity Polytopes.}
  \label{FigureParityFlowExtension}
\end{figure}

We can now modify the network such that a similar projection yields $\Peven{r}$ for arbitrary vectors $r \in \N^k$.
The new construction works as follows.
The node set of the new digraph is the same as the old digraph for $n = k$, i.e.,
the node set of a $(2 \times (k+1))$-grid graph without the lower-left and lower-right corner nodes.
Again, $s$ and $t$ are the left-most and right-most nodes, respectively.
The arcs again go from left to right, and the arcs from the $k$ grid-layers correspond to the
sets of variables $x^{(i)}$ for $i \in [k]$ in the sense that the projection of an arc variable
only has impact on a subset of the corresponding $x$-variables.
For every layer $i \in [k]$ and every node $(i-1,\alpha)$ (for $\alpha=0$ if $i=1$ or $i=k$, and for $\alpha \in \setdef{0,1}$ otherwise)
on the left of this layer there are $r_i+1$ outgoing arcs, each associated to one of the possible
values $\ell \in \setdef{0,1,2,\dotsc,r_i}$. The arcs with even $\ell$ all go to the same node $(i,\alpha)$,
and the arcs with odd $\ell$ all go to node $(i,1-\alpha)$.

The projection map is defined by a 0/1-matrix. We describe it column-wise, i.e.,
we state for each arc in which rows (corresponding to $x$-variables) it contains a $1$-entry.
Consider an arc going from node $(i-1,\alpha)$ to node $(i,\beta)$ that is associated to $\ell \in \setdef{0,1,2,\dotsc,r_i}$,
and thus satisfies $\alpha + \beta \equiv \ell \pmod 2$.
The column of the projection matrix that corresponds to this arc contains $1$'s in rows corresponding to the variables $x^{(i)}_{1}, x^{(i)}_{2}, \dotsc, x^{(i)}_{\ell}$.
The last part in Figure~\ref{FigureParityFlowExtension} illustrates a layer for $r_i = 3$.

The correctness of the modification is justified similarly to the ordinary construction.
First, the flow polytope is integral due to total unimodularity of its constraint matrix.
Second, the projection matrix is integral.
Third, the integral $s$-$t$-flows map to the correct vectors $x$.
To see this, observe that for every $i \in [k]$, the corresponding arcs of the network
correspond to the vectors in $\Xord{r_i}$,
and that the number of diagonal arcs in such a path must be an even number.

\pagebreak[3]
\section{Outer Description}
\label{SectionOuterDescription}

We begin with the simple observation that for ordered binary vectors,
parity can be measured using a linear function.
For $n \in \N$ we define $f : \Pord{n} \to \R$
via $f(x) := \sum_{i=1}^n (-1)^{i-1} x_i$.
From
\begin{align*}
  f(x) &= \underbrace{x_1 - x_2}_{\geq~0} + \underbrace{x_3 - x_4}_{\geq~0} + x_5 - \dotsb \pm x_n \qquad\text{and} \\
  f(x) &= \underbrace{x_1}_{\leq~1} - \underbrace{(x_2 - x_3)}_{\geq~0} - \underbrace{(x_4 - x_5)}_{\geq~0} + \dotsb \pm x_n
\end{align*}
we obtain that $f$ maps $\Pord{n}$ into $[0,1]$.
Thus, for $x \in \Xord{n}$, the value $f(x)$ is binary, and it is easy to check that $f(x) = 1$ holds if and only if $x$ has
an odd number of $1$'s.
Note that we will use $f$ for different values of $n$, and assume that it is clear from the context.
We can now state our main theorem.
\begin{theorem}
  \label{TheoremOrderedParityOuterDescriptionEven}
  Let $r \in \N^k$.
  The ordered even parity polytope $\Peven{r}$ is equal to the set of those
  points $(x^{(1)}, \dotsc, x^{(k)}) \in \Pord{r_1} \times \dotsb \times \Pord{r_k}$
  that satisfy the inequalities
  \begin{gather}
    \label{InequalityOrderedParity}
    \sum_{i \in [k] \setminus F} f(x^{(i)}) + \sum_{i \in F} (1-f(x^{(i)})) \geq 1
  \end{gather}
   for all $F \subseteq [k]$ with $|F|$ odd.
\end{theorem}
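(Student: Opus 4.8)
The plan is to transfer the known description~\eqref{InequalityParity} of the ordinary even parity polytope to the ordered setting through the linear functional $f$, exploiting that $f$ maps $\Pord{n}$ into $[0,1]$ and detects the parity of an ordered binary vector. For the inclusion of $\Peven{r}$ in the right-hand side (the easy direction) it suffices to check the inequalities at the generating integer points: for $(w^{(1)},\dotsc,w^{(k)})$ with $w^{(i)} \in \Xord{r_i}$ and $\sum_i\sum_j w^{(i)}_j$ even, each $w^{(i)}$ lies in $\Pord{r_i}$, and $(f(w^{(1)}),\dotsc,f(w^{(k)})) \in \{0,1\}^k$ has even coordinate sum since $f(w^{(i)}) \equiv \sum_j w^{(i)}_j \pmod 2$. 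Hence this $0/1$-vector lies in $\Peven{\onevec[k]}$ and satisfies~\eqref{InequalityParity}, which is precisely~\eqref{InequalityOrderedParity} evaluated at $(w^{(1)},\dotsc,w^{(k)})$. As the left-hand side of~\eqref{InequalityOrderedParity} is affine in $(x^{(1)},\dotsc,x^{(k)})$ and each $\Pord{r_i}$ is convex, validity at all generating points propagates to all of $\Peven{r}$.

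For the reverse inclusion the central tool is a parity-splitting lemma for $\Pord{n}$. Writing $e^{(\ell)}$ for the ordered binary vector with $\ell$ leading ones, we have $\Xord{n} = \{e^{(0)},\dotsc,e^{(n)}\}$, so $\Pord{n}$ is the convex hull of these $n+1$ vectors; for any convex representation $x = \sum_{\ell=0}^n \nu_\ell e^{(\ell)}$ the identity $f(x) = \sum_{\ell \text{ odd}} \nu_\ell =: c$ follows from linearity of $f$ and the fact that $f(e^{(\ell)}) = 1$ precisely when $\ell$ is odd. Separating the odd-$\ell$ from the even-$\ell$ terms in this sum then writes $x = c\,x^{1} + (1-c)\,x^{0}$ with $x^{1}$ in the convex hull of the odd-parity vectors of $\Xord{n}$ and $x^{0}$ in the convex hull of the even-parity ones (when $c \in \{0,1\}$ only one of the two pieces is needed).

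Given this, take any $(x^{(1)},\dotsc,x^{(k)})$ with $x^{(i)} \in \Pord{r_i}$ satisfying all inequalities~\eqref{InequalityOrderedParity}, set $c_i := f(x^{(i)}) \in [0,1]$, and split $x^{(i)} = c_i\,x^{(i),1} + (1-c_i)\,x^{(i),0}$ by the lemma. The inequalities~\eqref{InequalityOrderedParity} say exactly that $(c_1,\dotsc,c_k)$ satisfies~\eqref{InequalityParity}, so this vector, lying in $[0,1]^k$, belongs to $\Peven{\onevec[k]}$ and can be written as $\sum_\lambda \mu_\lambda z^\lambda$ with $z^\lambda \in \{0,1\}^k$ of even coordinate sum. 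Substituting $c_i = \sum_\lambda \mu_\lambda z^{\lambda}_i$ and $1 - c_i = \sum_\lambda \mu_\lambda (1 - z^{\lambda}_i)$ into the split and reordering the sums gives $(x^{(1)},\dotsc,x^{(k)}) = \sum_\lambda \mu_\lambda \bigl(x^{(1),z^{\lambda}_1},\dotsc,x^{(k),z^{\lambda}_k}\bigr)$. For each $\lambda$, the tuple $\bigl(x^{(1),z^{\lambda}_1},\dotsc,x^{(k),z^{\lambda}_k}\bigr)$ is a convex combination of tuples $(w^{(1)},\dotsc,w^{(k)})$ with $w^{(i)} \in \Xord{r_i}$ and $\sum_j w^{(i)}_j \equiv z^{\lambda}_i \pmod 2$; such a tuple has even total sum because $z^\lambda$ does, hence lies in $\Peven{r}$. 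Therefore each $\lambda$-term, and with it the whole convex combination, lies in $\Peven{r}$.

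I expect the only genuine obstacle to be the splitting lemma, and even that is mild once one uses the vertex description $\Xord{n} = \{e^{(0)},\dotsc,e^{(n)}\}$ with its built-in parity information; the remaining steps are bookkeeping with the identities above. A conceivable alternative would be to route the argument through the extended formulation of Section~\ref{SectionExtendedFormulation} by projecting out the flow variables, but carrying out that projection explicitly looks more cumbersome than the direct decomposition sketched here.
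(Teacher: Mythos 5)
Your proof is correct, but it takes a genuinely different route from the paper's. The paper proves integrality by induction on the number of components of $r$ exceeding $1$: it peels off one group at a time, gluing $\Peven{(r_1,\dotsc,r_{k-1},1)}$ to the polytope $Q = \setdef{(\lambda,x^{(k)})}[\lambda = f(x^{(k)})]$ via a general ``glueing lemma'' for $0/1$-polytopes sharing a single coordinate, and bottoms out at Jeroslow's description \eqref{InequalityParity}. You instead give a one-shot decomposition: using the explicit vertex description $\Xord{n} = \{e^{(0)},\dotsc,e^{(n)}\}$ you split each $x^{(i)}$ as $c_i x^{(i),1} + (1-c_i) x^{(i),0}$ with $c_i = f(x^{(i)})$ and the two pieces lying in the convex hulls of the odd- and even-parity vertices, observe that \eqref{InequalityOrderedParity} says exactly that $(c_1,\dotsc,c_k) \in \Peven{\onevec[k]}$, and then push a vertex decomposition of that point through to an explicit convex combination of even-parity vertices of $\Peven{r}$. (The degenerate case $c_i \in \{0,1\}$ is harmless since the missing piece then carries coefficient zero, as you note.) Both arguments ultimately rest on \eqref{InequalityParity}; yours avoids the induction and makes transparent that $\Peven{r}$ is the preimage of the ordinary even parity polytope under the affine map $x \mapsto (f(x^{(1)}),\dotsc,f(x^{(k)}))$ intersected with $\Pord{r_1} \times \dotsb \times \Pord{r_k}$, while the paper's glueing lemma is more modular and reusable (it is stated for arbitrary finite sets, not just ordered vectors).
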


In order to prove the theorem we will make use of the following lemma.

\begin{lemma}
  \label{TheoremGlueingLemma}
  Let $X_0,X_1 \subseteq \R^m$ and $Y_0,Y_1 \subseteq \R^n$ be finite sets.
  Then the polytope
  \begin{gather*}
    \conv[(X_0 \times \setdef{0} \times Y_0) \cup (X_1 \times \setdef{1} \times Y_1)]
  \end{gather*}
  is equal to the polytope
  \begin{gather*}
    \conv[(X_0 \times \setdef{0}) \cup (X_1 \times \setdef{1})] \times \R^n
    ~~\cap~~
    \R^m \times \conv[(\setdef{0} \times Y_0) \cup (\setdef{1} \times Y1)] .
  \end{gather*}
\end{lemma}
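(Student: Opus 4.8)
The plan is to prove the two inclusions separately. Write $P$ for the polytope on the left-hand side, and set $Q_1 := \conv[(X_0 \times \setdef{0}) \cup (X_1 \times \setdef{1})]$ and $Q_2 := \conv[(\setdef{0} \times Y_0) \cup (\setdef{1} \times Y_1)]$, so that the right-hand side is $(Q_1 \times \R^n) \cap (\R^m \times Q_2)$. A point of the ambient space $\R^m \times \R \times \R^n$ will be written as a triple $(x,\lambda,y)$. The inclusion ``$\subseteq$'' is the easy direction: every generator $(a,0,c)$ of $P$ (with $a \in X_0$, $c \in Y_0$) satisfies $(a,0) \in Q_1$ and $(0,c) \in Q_2$, hence lies in both cylinders $Q_1 \times \R^n$ and $\R^m \times Q_2$; the same holds for the generators $(b,1,d)$ with $b \in X_1$, $d \in Y_1$. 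Since the right-hand side is an intersection of convex sets, it is convex, and therefore contains $P$.

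For the reverse inclusion, take $(x,\lambda,y)$ in the right-hand side. Because $(x,\lambda) \in Q_1$ and the generators of $Q_1$ have middle coordinate $0$ or $1$, we get $\lambda \in [0,1]$. Moreover, grouping a convex-combination representation of $(x,\lambda)$ according to the value of the middle coordinate shows that, when $0 < \lambda < 1$, one can write $x = (1-\lambda)\,x^0 + \lambda\,x^1$ with $x^0 \in \conv[X_0]$ and $x^1 \in \conv[X_1]$: the generators with middle coordinate $1$ carry total weight $\lambda$, those with middle coordinate $0$ carry total weight $1-\lambda$. The identical argument applied to $(\lambda,y) \in Q_2$ yields $y = (1-\lambda)\,y^0 + \lambda\,y^1$ with $y^0 \in \conv[Y_0]$ and $y^1 \in \conv[Y_1]$, using the \emph{same} scalar $\lambda$.

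The crux is now that these two decompositions share the mixing weight $\lambda$, so they can be glued: $(x,\lambda,y) = (1-\lambda)\,(x^0,0,y^0) + \lambda\,(x^1,1,y^1)$. Using the elementary identity $\conv[A \times B] = \conv[A] \times \conv[B]$ for finite sets $A,B$, we have $(x^0,0,y^0) \in \conv[X_0] \times \setdef{0} \times \conv[Y_0] = \conv[X_0 \times \setdef{0} \times Y_0] \subseteq P$, and likewise $(x^1,1,y^1) \in \conv[X_1 \times \setdef{1} \times Y_1] \subseteq P$; convexity of $P$ then gives $(x,\lambda,y) \in P$. The boundary cases $\lambda \in \setdef{0,1}$ are handled the same way, only more directly: for $\lambda = 0$, membership in $Q_1$ and in $Q_2$ forces $x \in \conv[X_0]$ and $y \in \conv[Y_0]$, so $(x,0,y) \in \conv[X_0 \times \setdef{0} \times Y_0] \subseteq P$, and symmetrically for $\lambda = 1$.

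I do not anticipate a genuine obstacle: the one step requiring care is the ``split by the value of the middle coordinate'' argument, together with the observation that it is legitimate to re-use the \emph{same} $\lambda$ for both the $x$- and the $y$-part. That shared coordinate is exactly what the lemma exploits, and everything else is routine bookkeeping. The auxiliary identity $\conv[A \times B] = \conv[A] \times \conv[B]$ is standard and, if desired, can be inserted as a one-line sub-argument by taking products of the convex multipliers.
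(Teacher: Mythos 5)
Your proof is correct and follows essentially the same route as the paper's: the easy inclusion via the generators, and the reverse inclusion by decomposing $(x,\lambda)$ and $(\lambda,y)$ into convex combinations grouped by the middle coordinate and recombining them with the shared weight $\lambda$. You are somewhat more careful than the paper about the bookkeeping of the convex multipliers and the boundary cases $\lambda \in \setdef{0,1}$, but this is the same argument.
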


\begin{proof}[Proof of Lemma~\ref{TheoremGlueingLemma}]
  The inclusion ``$\subseteq$'' is simple
  since for $i \in \setdef{0,1}$, every vector $(x,i,y) \in X_i \times \setdef{i} \times Y_i$
  satisfies $(x,i) \in X_i \times \setdef{i}$ and $(i,y) \in \setdef{i} \times Y_i$.

  To see ``$\supseteq$'', consider a point $(x,\lambda,y)$ from the second set,
  which clearly satisfies $0 \leq \lambda \leq 1$.
  Thus $(x,\lambda)$ is a convex combination of a point $(x^{(0)},0) \in \conv[X_0] \times \setdef{0}$ and a point
  $(x^{(1)},1) \in \conv[X_1] \times \setdef{1}$.
  From the $\lambda$-coordinate we obtain
  $x = \lambda x^{(0)} + (1-\lambda) x^{(1)}$.
  Similarly, there exist points $y^{(0)} \in \conv[Y_0]$ and $y^{(1)} \in \conv[Y_1]$ such that
  $y = \lambda y^{(0)} + (1-\lambda) y^{(1)}$ holds.
  This implies
  that $(x,\lambda,y) = \lambda (x^{(0)},0,y^{(0)}) + (1-\lambda) (x^{(1)},1,y^{(1)})$ holds,
  which concludes the proof.
\end{proof}

In our situation this lemma implies that if we ``glue together'' arbitrary 0/1-polytopes at a \emph{single} coordinate,
then the outer description of the resulting polytope is obtained by
taking the union of the outer descriptions of the two component polytopes.
We can now prove Theorem~\ref{TheoremOrderedParityOuterDescriptionEven}.

\begin{proof}[Proof of Theorem~\ref{TheoremOrderedParityOuterDescriptionEven}]
  We start with the validity of Inequalities~\eqref{InequalityOrderedParity} for $\Peven{r}$.
  Let $F \subseteq [k]$ be of odd cardinality.
  Since $f(x^{(i)}) \in \setdef{0,1}$ for every $x^{(i)} \in \Xord{r_i}$,
  the inequality can only be violated if all summands are equal to $0$, i.e.,
  if $f(x^{(i)}) = 0$ for all $i \in [k] \setminus F$
  and $f(x^{(i)}) = 1$ for all $i \in F$.
  This in turn means that $x^{(i)}$ contains an odd number of $1$'s
  if and only if $i \in F$ holds, contradicting the fact that
  the overall number of $1$'s is even because $|F|$ is odd.

  We now turn to the proof that Inequalities~\eqref{InequalityOrderedParity}
  cut off those points in $\Xord{r_1} \times \dotsb \times \Xord{r_k}$
  that have an odd number of $1$'s.
  Let $x = (x^{(1)}, \dotsc, x^{(k)})$ be such a point and define
  $F$ to be the set of indices $i \in [k]$
  for which $x^{(i)}$ has an odd number of $1$'s.
  Since $x$ has an odd number of $1$'s, we have that $|F|$ is odd.
  Furthermore, the left-hand side of \eqref{InequalityOrderedParity}
  is equal to $0$, i.e., $x$ violates it.

  It remains to prove that Inequalities~\eqref{InequalityOrderedParity}
  together with the inequalities that define each $\Pord{r_i}$
  for $i = 1,2,\dotsc,k$ yield an integral polytope.
  We show this by induction on the number $\ell$ of components of $r$
  that are strictly greater than $1$.
  For $\ell = 0$ we have $r = \onevec[n]$, and $\Peven{r}$
  is, according to \eqref{InequalityParity}, the even parity polytope.

  Let $\ell \in \N$.
  W.l.o.g., we can assume that $r_k \geq 2$ holds
  since otherwise we permute the variables suitably.
  By induction hypothesis we know that
  $$\Peven{(r_1,\dotsc,r_{k-1},1)} = \setdef{ (x^{(1)},\dotsc,x^{(k-1)},\lambda) \in \Pord{r_1} \times \dotsb \times \Pord{r_{k-1}} \times [0,1]
    }[ (x^{(1)},\dotsc,x^{(k-1)},\lambda) \text{ satisfies \eqref{InequalityOrderedParity}} ]$$
  holds, and is a $0/1$-polytope.
  Furthermore, the polytope
  $$Q := \setdef{ (\lambda,x^{(k)}) \in [0,1] \times \Pord{r_k} }[ \lambda = f(x^{(k)}) ]$$
  is affinely isomorphic to $\Pord{r_k}$.
  Since the latter is a $0/1$-polytope
  and since $f$ maps $\Xord{r_k}$ to $\setdef{0,1}$, $Q$ is also a $0/1$-polytope.
  Applying Lemma~\ref{TheoremGlueingLemma} to $\Peven{(r_1,\dotsc,r_{k-1},1)}$ and $Q$
  yields that
  \begin{multline*}
    R := \{ (x^{(1)},\dotsc,x^{(k-1)},\lambda,x^{(k)}) \in \Pord{r_1} \times \dotsb \times \Pord{r_{k-1}} \times [0,1] \times \Pord{r_k}
      \mid \lambda = f(x^{(k)}) \text{ and } \\
      (x^{(1)},\dotsc,x^{(k)}) \text{ satisfies \eqref{InequalityOrderedParity}} \}
  \end{multline*}
  is an integral polytope.
  The orthogonal projection of $R$ onto the $x$-variables is again an integral polytope
  and due to the equation $\lambda = f(x^{(k)})$ it is equal to the one in question.
\end{proof}

\pagebreak[3]
\section{Odd Parities}
\label{SectionOddParities}

\DeclareDocumentCommand\Podd{m}{P_{\text{odd}}^{#1}}

In this section we derive the outer description of the ordered \emph{odd} polytopes.
The latter are defined as
\begin{gather*}
  \Podd{r} := \conv\{ (x^{(1)}, \dotsc, x^{(k)}) \in \Xord{r_1} \times \dotsc \times \Xord{r_k} \mid \sum_{i=1}^k \sum_{j=1}^{r_i} x^{(i)}_j \text{ odd} \},
\end{gather*}
again for $r \in \N^k$.
The result corresponding to Theorem~\ref{TheoremOrderedParityOuterDescriptionEven} is the following:

\begin{corollary}
  \label{TheoremOrderedParityOuterDescriptionOdd}
  The ordered odd parity polytope for $r \in \N^k$ is the set of $(x^{(1)}, \dotsc, x^{(k)}) \in \Pord{r_1} \times \dotsc \times \Pord{r_k}$
  that satisfy Inequalities~\eqref{InequalityOrderedParity}
   for all $F \subseteq [k]$ with $|F|$ even.
\end{corollary}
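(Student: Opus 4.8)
The plan is to reduce Corollary~\ref{TheoremOrderedParityOuterDescriptionOdd} to Theorem~\ref{TheoremOrderedParityOuterDescriptionEven} by adding one auxiliary group consisting of a single variable that is then fixed to $1$. Set $r' := (1, r_1, \dotsc, r_k) \in \N^{k+1}$ and reindex the $k+1$ groups by $0,1,\dotsc,k$, so that a point of $\Pord{1} \times \Pord{r_1} \times \dotsb \times \Pord{r_k}$ is written $(x^{(0)}, x^{(1)}, \dotsc, x^{(k)})$ with $x^{(0)} \in [0,1]$ and $f(x^{(0)}) = x^{(0)}$ (since $f$ on a group of size $1$ is the identity). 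The enlarged sum $x^{(0)} + \sum_{i=1}^k \sum_j x^{(i)}_j$ is even precisely when $x^{(0)} = 1$ and $\sum_{i=1}^k \sum_j x^{(i)}_j$ is odd, or $x^{(0)} = 0$ and that sum is even; hence restricting to $x^{(0)} = 1$ singles out exactly the odd-parity configurations of the original groups.

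\textbf{Key steps.} First I would observe that $\{x^{(0)} = 1\}$ determines a face $G$ of $\Peven{r'}$: the inequality $x^{(0)} \le 1$ is among those defining $\Pord{1}$ and hence valid for $\Peven{r'}$. Since $\Peven{r'}$ is an integral polytope by Theorem~\ref{TheoremOrderedParityOuterDescriptionEven}, the face $G$ is integral, and its vertices are precisely the vertices of $\Peven{r'}$ with $x^{(0)} = 1$, i.e.\ the $0/1$-points $(1, x^{(1)}, \dotsc, x^{(k)})$ that are ordered within each group and satisfy that $\sum_{i=1}^k \sum_j x^{(i)}_j$ is odd. Dropping the constant coordinate $x^{(0)}$ is an affine isomorphism carrying $G$ onto $\Podd{r}$; in particular $\Podd{r}$ is integral, and an outer description of $\Podd{r}$ is obtained from the outer description of $\Peven{r'}$ given by Theorem~\ref{TheoremOrderedParityOuterDescriptionEven} by performing the substitution $f(x^{(0)}) \mapsto 1$ and deleting the variable $x^{(0)}$.

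Next I would push the description through that substitution. The constraints describing each $\Pord{r_i}$ with $i \in [k]$ are unaffected, and the $\Pord{1}$-constraints $0 \le x^{(0)} \le 1$ disappear. For an inequality~\eqref{InequalityOrderedParity} associated with an odd subset $F' \subseteq \{0\} \cup [k]$ there are two cases. If $0 \notin F'$, then after setting $f(x^{(0)}) = 1$ the left-hand side becomes $1$ plus a sum of nonnegative terms, so the inequality is trivially valid and may be discarded. If $0 \in F'$, write $F := F' \setminus \{0\} \subseteq [k]$, which has even cardinality; the summand $1 - f(x^{(0)})$ vanishes and what remains is exactly Inequality~\eqref{InequalityOrderedParity} for the even set $F$. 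As $F'$ ranges over the odd subsets of $\{0\} \cup [k]$ containing $0$, the set $F$ ranges over all even subsets of $[k]$, which is precisely the claimed description.

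\textbf{Main obstacle.} I expect no deep difficulty; the delicate points are purely bookkeeping: verifying the parity statement under $x^{(0)} = 1$, and invoking correctly the standard facts that a face of an integral polytope is integral and that its (and its image under coordinate deletion's) outer description is the induced one. As an alternative one could mimic the proof of Theorem~\ref{TheoremOrderedParityOuterDescriptionEven} directly — validity of~\eqref{InequalityOrderedParity} for even $|F|$ and the cutting-off argument for even-parity points are immediate, and integrality would again follow by induction via Lemma~\ref{TheoremGlueingLemma} — but the reduction above avoids repeating that argument.
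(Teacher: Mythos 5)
Your proposal is correct and follows essentially the same route as the paper's proof: both append an auxiliary group of size one, restrict to the face where that coordinate equals $1$ (turning even total parity into odd parity on the remaining groups), and then observe that the Inequalities~\eqref{InequalityOrderedParity} for odd sets not containing the new index become redundant while those containing it reduce exactly to the inequalities for even subsets of $[k]$. The only differences are cosmetic (you place the auxiliary coordinate first rather than last, and you spell out the standard facts about faces of integral polytopes slightly more explicitly).
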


In principle we could repeat the proof for Theorem~\ref{TheoremOrderedParityOuterDescriptionEven},
and exchange ``odd'' and ``even'' suitably.
Instead we present a proof that uses the previous result.

\begin{proof}
  We claim that
  $\Podd{(r_1,\dotsc,r_k)}$ is the projection of
  the face defined by $x^{(k+1)}_1 = 1$ of the polytope $\Peven{(r_1, \dotsc, r_k,1)}$
  onto the variables $x^{(1)},\dotsc,x^{(k)}$.
  This is justified by the fact that fixing the last component to $1$
  exchanges the meaning of ``odd'' and ``even'' for the sum of the (remaining) components.

  From Theorem~\ref{TheoremOrderedParityOuterDescriptionEven}
  we know the outer description of $\Peven{(r_1, \dotsc, r_k,1)}$,
  so it remains to carry out the projection.
  The only constraints that involve the last component are the bounds $0 \leq x^{(k+1)} \leq 1$
  and the Inequalities~\eqref{InequalityOrderedParity}
  for odd-cardinality sets $F \subseteq [k+1]$.
  The former need not be considered, and if $k+1 \notin F$ holds,
  then the latter inequalities are redundant since they read
  $\sum_{i \in [k] \setminus F} f(x^{(i)}) + 1 + \sum_{i \in F} (1-f(x^{(i)})) \geq 1$,
  and for $x^{(i)} \in \Pord{r_i}$ we know that $f(x^{(i)})$ and $1 - f(x^{(i)})$ are nonnegative.
  It remains to consider the case of $k+1 \in F$,
  in which the inequality is
  $\sum_{i \in [k] \setminus F} f(x^{(i)}) + \sum_{i \in F \cap [k]} (1-f(x^{(i)})) + (1-1) \geq 1$,
  which is equivalent to Inequality~\eqref{InequalityOrderedParity} for $F \cap [k]$
  in the projection. Note that $F \cap [k]$ has even cardinality.
\end{proof}

\pagebreak[3]
\section{Separation}
\label{SectionSeparation}

In this section we show how to solve the separation problem for Inequalities~\eqref{InequalityOrderedParity}.
\begin{theorem}
  Let $r \in \N^k$ and let $\hat{x} = (\hat{x}^{(1)}, \dotsc, \hat{x}^{(k)}) \in \Pord{r_1} \times \dotsc \times \Pord{r_k}$.
  We can decide in linear time if $\hat{x} \in \Peven{r}$ holds, and if this is not the case,
  obtain an odd set $F \subseteq [k]$ whose associated Inequality~\eqref{InequalityOrderedParity}
  is violated by $\hat{x}$.
\end{theorem}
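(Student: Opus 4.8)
The plan is to mimic the classical separation routine for the parity polytope $\Peven{\onevec[n]}$ applied to the transformed point. First I would compute, for each group $i \in [k]$, the scalar $\mu_i := f(\hat{x}^{(i)}) = \sum_{j=1}^{r_i} (-1)^{j-1} \hat{x}^{(i)}_j$; by the linear-algebra observation preceding Theorem~\ref{TheoremOrderedParityOuterDescriptionEven} each $\mu_i$ lies in $[0,1]$, and this whole step takes $\orderO{n}$ time where $n = r_1 + \dotsb + r_k$. The membership question $\hat{x} \in \Peven{r}$ is then, by Theorem~\ref{TheoremOrderedParityOuterDescriptionEven}, exactly the question whether the point $\mu = (\mu_1,\dotsc,\mu_k) \in [0,1]^k$ satisfies all inequalities~\eqref{InequalityParity} defining $\Peven{\onevec[k]}$, since~\eqref{InequalityOrderedParity} for the set $F$ is precisely inequality~\eqref{InequalityParity} for $F$ evaluated at $\mu$. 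So the problem reduces to separating $\mu$ from the ordinary even parity polytope in dimension $k$.

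Next I would recall (or re-derive in one line) the standard fact that a point $\mu \in [0,1]^k$ violates some odd-$F$ inequality of~\eqref{InequalityParity} if and only if the quantity $\min_{|F| \text{ odd}} \bigl( \sum_{i \notin F} \mu_i + \sum_{i \in F} (1-\mu_i) \bigr)$ is strictly less than $1$. Writing $c_i := \mu_i$ and $c_i' := 1-\mu_i$, each index $i$ contributes $\min(c_i, c_i')$ to the unconstrained minimum, and we are forced to flip an odd number of indices into $F$; so the optimal strategy is to take $F^{*}$ to be the set of indices with $\mu_i > 1/2$ (those "want" to be flipped), and then, if $|F^{*}|$ turns out to be even, to repair parity by toggling the single index $i^{*}$ minimizing $|2\mu_i - 1|$ (the cheapest index to move to the wrong side). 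The resulting value is $\sum_i \min(\mu_i, 1-\mu_i) + [\,|F^{*}| \text{ even}\,]\cdot |2\mu_{i^{*}}-1|$, and one checks this is exactly the minimum over all odd $F$. If this value is $\geq 1$ the point is inside; otherwise the corresponding $F$ (either $F^{*}$ or $F^{*} \bigtriangleup \{i^{*}\}$) is an odd set whose inequality is violated, and it is returned. All of this is a single pass over the $k$ values plus the bookkeeping for the minimum, hence $\orderO{k} \subseteq \orderO{n}$.

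I would then assemble the two pieces: the $\orderO{n}$ evaluation of the $\mu_i$, followed by the $\orderO{k}$ combinatorial step above, gives an overall linear-time algorithm, and the set $F$ it outputs is literally an index set in $[k]$ whose Inequality~\eqref{InequalityOrderedParity} is violated by $\hat{x}$ — no un-projection is needed because~\eqref{InequalityOrderedParity} is already stated in terms of the $f(x^{(i)})$. The main obstacle is not algorithmic but a correctness argument: one must verify that the greedy "flip everything above $1/2$, then repair parity with the cheapest toggle" strategy genuinely attains the minimum over all odd-cardinality $F$. This follows from an exchange argument — the objective decomposes additively over coordinates, so moving an index from the wrong side to the right side never increases the objective, and a parity defect can only ever be fixed by a single toggle of the least-cost coordinate — but it is the one place where a short proof is actually required rather than a routine observation.
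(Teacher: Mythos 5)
Your proposal is correct and follows essentially the same route as the paper: compute $\hat{\lambda}_i = f(\hat{x}^{(i)})$ in $\orderO{n}$ time, then greedily take $F'$ to be the indices with value above $\tfrac{1}{2}$ and, if needed, repair parity by toggling the index closest to $\tfrac{1}{2}$. Your added remark that the greedy's optimality deserves a short exchange argument is a fair point the paper glosses over with ``it is easy to see.''
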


\DeclareDocumentCommand\symDiff{}{\Delta}

\begin{proof}
  It is easy to see that it suffices to
  compute $\hat{\lambda}_i := f(\hat{x}^{(i)}) \in [0,1]$ for every $i \in [k]$ and then
  to minimize $\sum_{i \in [k] \setminus F} \hat{\lambda}_i + \sum_{i \in F} (1 - \hat{\lambda}_i)$
  over all $F \subseteq [k]$ of odd cardinality.
  This is done by computing the set $F' := \setdef{ i \in [k] }[ \hat{\lambda}_i > \frac{1}{2} ]$.
  If $|F'|$ is odd, then $F := F'$ is the minimizer.
  Otherwise, we let $F := F' \symDiff \{\hat{i}\}$ (by $\symDiff$ we denote the symmetric difference, i.e., $A \symDiff B := (A \cup B) \setminus (A \cap B)$)
  where $\hat{i} \in [k]$ is the index for which $|\hat{\lambda}_{\hat{i}} - \frac{1}{2}|$ is minimum.

  The computation of $\hat{\lambda}$ can obviously be done in linear time.
  Furthermore, the construction of $F'$ and the search for $\hat{i}$ can be carried out in time $\orderO{k}$
  which results in a linear total running time.
\end{proof}

Directly from Corollary~\ref{TheoremOrderedParityOuterDescriptionOdd} we obtain the separation algorithm for
the ordered \emph{odd} parity polytopes. The only difference to the even case
is that $|F|$ must be even, i.e., we set $F := F' \symDiff \{\hat{i}\}$ (for the same $\hat{i}$) if and only if $|F'|$ is odd.

\begin{corollary}
  Let $r \in \N^k$ and let $\hat{x} = (\hat{x}^{(1)}, \dotsc, \hat{x}^{(k)}) \in \Pord{r_1} \times \dotsb \times \Pord{r_k}$.
  We can decide in linear time if $\hat{x} \in \Podd{r}$ holds, and if this is not the case,
  obtain an even set $F \subseteq [k]$ whose associated Inequality~\eqref{InequalityOrderedParity}
  is violated by $\hat{x}$.
\end{corollary}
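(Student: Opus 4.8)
The plan is to reduce the odd-parity separation problem to the even-parity one already solved, mirroring how Corollary~\ref{TheoremOrderedParityOuterDescriptionOdd} was derived from Theorem~\ref{TheoremOrderedParityOuterDescriptionEven}. First I would compute $\hat{\lambda}_i := f(\hat{x}^{(i)}) \in [0,1]$ for every $i \in [k]$, which by the preceding discussion takes linear time. By Corollary~\ref{TheoremOrderedParityOuterDescriptionOdd}, the point $\hat{x}$ lies in $\Podd{r}$ if and only if every Inequality~\eqref{InequalityOrderedParity} associated to an even-cardinality set $F \subseteq [k]$ is satisfied, so it suffices to minimize $g(F) := \sum_{i \in [k] \setminus F} \hat{\lambda}_i + \sum_{i \in F} (1 - \hat{\lambda}_i)$ over all such $F$ and check whether the minimum is below $1$.

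The key observation, identical to the even case, is that the contribution of each index $i$ to $g(F)$ is $\hat{\lambda}_i$ if $i \notin F$ and $1 - \hat{\lambda}_i$ if $i \in F$; the unconstrained minimizer therefore puts $i$ into $F$ exactly when $\hat{\lambda}_i > \tfrac12$, yielding the set $F' := \setdef{ i \in [k] }[ \hat{\lambda}_i > \tfrac12 ]$. If $|F'|$ is even we are done with $F := F'$. If $|F'|$ is odd, the parity constraint forces us to flip exactly one index; since flipping index $i$ changes $g$ by $|2\hat{\lambda}_i - 1|$, the cheapest repair is to take $F := F' \symDiff \{\hat{i}\}$ where $\hat{i}$ minimizes $|\hat{\lambda}_{\hat{i}} - \tfrac12|$. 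A short exchange argument shows no other even set does better: any even $F$ differs from $F'$ in a nonempty set of flips of even or odd size; if $|F'|$ is odd then $|F \symDiff F'|$ is odd, hence at least one flip is forced, and the cost of each flip is at least $|2\hat{\lambda}_{\hat{i}} - 1|$. Finally one compares $g(F)$ with $1$: if $g(F) < 1$, output $F$; otherwise $\hat{x} \in \Podd{r}$.

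For the running time, the evaluation of $\hat{\lambda}$, the construction of $F'$, and the search for $\hat{i}$ are each $\orderO{k}$, and the size of the input $\hat{x}$ is $r_1 + \dotsb + r_k \geq k$, so the total running time is linear. I do not expect a genuine obstacle here; the only point requiring a little care is the exchange argument establishing optimality of the single-flip repair, and even that is a direct transcription of the reasoning in the even-parity proof with the roles of ``odd'' and ``even'' swapped, exactly as the text preceding this corollary indicates.
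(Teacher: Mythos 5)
Your proposal is correct and follows the same route as the paper: the paper likewise reduces to minimizing the left-hand side over even-cardinality sets via $F' := \setdef{i \in [k]}[\hat{\lambda}_i > \frac{1}{2}]$, flipping the index $\hat{i}$ closest to $\frac{1}{2}$ exactly when $|F'|$ is odd, with the same linear-time accounting. The paper merely states this as an immediate adaptation of the even case without spelling out the exchange argument, which you have (correctly) made explicit.
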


\section{Strengthened Blossom Inequalities for the Graphic TSP}
\label{SectionGTSP}

In this section we consider the \emph{graphic traveling salesman problem (GTSP)},
defined as follows.
The input consists of an undirected graph $G = (V,E)$, and the goal
is to find a minimum-length closed walk in $G$
that visits every node at least once.
Recently,~\citeauthor{SeboV14} developed a 7/5-approximation algorithm for this problem~\cite{SeboV14}.

When modeling it as an integer program (IP) one has different options.
The first is based on the observation that the problem is equivalent
to the traveling salesman problem on the complete graph with $|V|$ nodes
and edge weights $c\in \R_+^E$ where $c_{u,v}$ is equal to the
(combinatorial) distance from $u$ to $v$ in $G$.
This model has $\bincoeff{|V|}{2}$ binary variables which is much greater than $|E|$ if $G$ is sparse.

\DeclareDocumentCommand\GcutEdges{}{\delta}

In order to model the problem with fewer variables we can start with the IP
\begin{alignat}{10}
  &\text{min }  & \sum_{e \in E} z_e \label{LPGraphicTSPObjectiveFirst} \\
  &\text{s.t. } & z(\GcutEdges(S))    & \geq 2 && \qquad\text{for all } \emptyset \neq S \subsetneqq V \label{LPGraphicTSPConnectivity} \\
  &             & z_e                 & \geq 0 && \qquad\text{for all } e \in E \label{LPGraphicTSPBounds} \\
  &             & z_e                 & \in \Z && \qquad\text{for all } e \in E, \label{LPGraphicTSPIntegralityLast}
\end{alignat}
where by $\GcutEdges(\cdot)$ we denote the cut-sets.
The solutions to this problem correspond to the $2$-edge-connected spanning subgraphs of $G$.
Since not every such subgraph is a closed walk, this IP does not model the GTSP.
The necessary additional requirement is the parity condition $z(\GcutEdges(v)) \in 2\Z$ for every $v \in V$.
Such a constraint cannot be modeled directly, i.e., by adding linear inequalities.
To obtain a correct model, one may add integer variables $y_v = \frac{1}{2} z(\GcutEdges(v))$ for every $v \in V$.
Unfortunately, this does not contribute to the strength of the LP relaxation since relaxing $y_v$'s integrality constraints
essentially makes them redundant.

It is well-known that an optimum solution will never use an edge more than twice
(otherwise we can decrease the value by $2$ and obtain a better feasible solution).
Hence we can restrict $z_e$ to the set $\setdef{0,1,2}$ which allows us to
perform binarization, i.e., to replace, for every $e \in E$, the variable
$z_e$ by $x_e^{(1)} + x_e^{(2)}$ with $x_e^{(1)}, x_e^{(2)} \in \setdef{0,1}$.
This again leads to the same strength of the LP relaxation, but now allows us to
enforce parity constraints:
since every closed walk traverses each cut $\GcutEdges(S)$
an even number of times, the following inequalities are valid
for every $S \subseteq V$ and every $F \subseteq \GcutEdges(S) \times \setdef{0,1}$
with $|F|$ odd.
\begin{gather}
  \label{InequalityBlossomOriginal}
  \sum_{ (e,i) \in (\GcutEdges(S) \times \setdef{0,1}) \setminus F} x^{(i)}_e + \sum_{(e,i) \in F} (1-x^{(i)}_e) \geq 1
\end{gather}
If we identify the variables $x^{(1)}_e$ and $x^{(2)}_e$ with the doubled edges of $G$,
then these inequalities are the well-known \emph{Blossom Inequalities}.
These enforce that, for every cut $\delta(S)$, the projection of $x$ onto the variables corresponding to $\delta(S)$ is in the even parity polytope.

We will soon strengthen these inequalities. In order to argue later why the separation problem for the strengthened inequalities can be solved efficiently, we review the separation problem for the unstrengthened ones.
The first separation algorithm for Inequalities~\eqref{InequalityBlossomOriginal} was developed
by~\citeauthor{PadbergR82}~\cite{PadbergR82}.
The asymptotically fastest separation algorithm is by~\citeauthor{LetchfordRT08}~\cite{LetchfordRT08}.
More precisely, they describe how to efficiently solve (a generalization of) the following problem.
Given a graph $\overline{G} = (\overline{V},\overline{E})$ and weight vectors $c,c' \in \R_+^{\overline{E}}$, their algorithm
can minimize the objective
\begin{gather}
  \beta(S,F) := \sum_{e \in \GcutEdges(S) \setminus F} c_e + \sum_{e \in F} c'_e
\end{gather}
over all sets $S \subseteq \overline{V}$ and all odd sets $F \subseteq \GcutEdges(S)$.
Clearly, by letting $\overline{G}$ be the graph obtained from $G$ by doubling every edge and identifying the two edges
$e_1,e_2 \in \overline{E}$ with the variables $x^{(1)}_e$ and $x^{(2)}_e$ for $e \in E$, respectively,
we can solve the separation problem for a point $\hat{x} \in [0,1]^{2E}$ by minimizing $\beta(S,F)$
with the following weight vectors:
\begin{gather*}
  c_{e_i} := \hat{x}^{(i)}_e \quad\text{ and }\quad c'_{e_i} := 1 - \hat{x}^{(i)}_e \quad \text{ for all $i=1,2$ and all $e \in E$}
\end{gather*}
In fact we do not have to double the graph's edges for the computation: the cases where $F$ contains none or both edges contribute the same to the parity constraint on $F$.
Hence, we only have to consider the case that has lower weight.
Similarly, only one of the two cases of $F$ containing precisely one of the two edges needs to be considered.
This can be implemented using the weights
\begin{gather*}
  c_e := \min( \hat{x}^{(1)}_e + \hat{x}^{(2)}_e, 2-\hat{x}^{(1)}_e -\hat{x}^{(2)}_e ) \quad\text{ and } \quad c'_e := \min( 1-\hat{x}^{(1)}_e +\hat{x}^{(2)}_e , 1+\hat{x}^{(1)}_e -\hat{x}^{(2)}_e )
\end{gather*}
for all $e \in E$.

\bigskip

\paragraph{Strengthened Constraints.}
To break the symmetry of $x^{(1)}_e$ and $x^{(2)}_e$ we add $x^{(1)}_e \geq x^{(2)}_e$ for each edge $e \in E$.
Thus, we can now enforce that, for every cut $\delta(S)$, the projection of $x$ onto the variables corresponding to $\delta(S)$ is in the ordered even parity polytope $\Peven{r}$, where $r = (2,2,\dotsc,2) \in \Z^{|\delta(S)|}$.
By Theorem~\ref{TheoremOrderedParityOuterDescriptionEven}, the strengthened version of Constraint~\eqref{InequalityBlossomOriginal} reads
\begin{gather}
  \label{InequalityBlossomStrengthened}
  \sum_{ e \in \GcutEdges(S) \setminus F} (x^{(1)}_e - x^{(2)}_e) + \sum_{e \in F} (1-x^{(1)}_e + x^{(2)}_e) \geq 1
\end{gather}
for every $S \subseteq V$ and every $F \subseteq \GcutEdges(S)$ with $|F|$ odd.
Using almost the same separation procedure we can solve the separation problem
by working in the original graph $G$, and defining the weight vectors via
\begin{gather*}
  c_e := \hat{x}^{(1)}_e - \hat{x}^{(2)}_e \quad\text{ and }\quad c'_e := 1 - \hat{x}^{(1)}_e + \hat{x}^{(2)}_e \quad \text{ for all $e \in E$} \ .
\end{gather*}
Note that the computational costs for separating Constraints~\eqref{InequalityBlossomOriginal} and~\eqref{InequalityBlossomStrengthened}
are almost equal.

\pagebreak[3]
\section{Strength of the Relaxation}
\label{SectionCheating}

We carried out computational experiments on the GTSP, and it turned out that
the lower bound obtained by the linear relaxation \eqref{LPGraphicTSPObjectiveFirst}--\eqref{LPGraphicTSPBounds}
was never improved by binarization of the variables and addition of (strengthened) parity constraints.
In fact, in the root node many (strengthened) Blossom Inequalities were added, but with no effect on the dual objective.

We investigated this effect, and observed the following weakness of the approach of
binarization and addition of parity constraints. The decisive property of our model is that
\emph{only} parity constraints actually use the binarization variables $x_e^{(i)}$: all other
variables consider the original variables $z_e$ (or, equivalently, the sum of the binarization variables $x_e^{(i)}$).

In a more abstract setting we consider arbitrary (integer) variables $z_1, \dotsc, z_k$ with domains
$z_i \in [0,r_k] \cap \Z$ for all $i \in [k]$, and apply binarization, i.e.,
introduce variables $x_j^{(i)} \in \Pord{r_i}$ and linking constraints $z_i = \sum_{j=1}^{r_i} x_j^{(i)}$
for each $i \in [k]$.
We can assume that further (arbitrary) constraints that link the $z$-variables are also present.

Now consider a parity constraint on (a subset of) the $z$-variables, which is of course stated in terms of
the corresponding $x$-variables.
Suppose we have a certain fractional relaxation solution $(\hat{z},\hat{x})$ that may violate this parity constraint,
but satisfies all other constraints.
If we can modify the $x$-variables such that the linking constraints are still satisfied (i.e., that
the sums $\sum_{j=1}^{r_i} \hat{x}_j^{(i)}$ remain constant) and such that the parity constraint
is satisfied, then we obtain a feasible solution of the same value.
Unfortunately, such a modification is possible under very mild conditions.
In fact, the modifications are
very general in the sense that they can be done independently for every set of binarization variables,
that is, for all variables $x_j^{(i)}$ for a \emph{single} $i \in [k]$.
Furthermore, they do neither depend on the parity inequality that is actually violated nor on the parity.
Hence, even if there exist many parity constraints whose variable sets overlap, such modifications can still
exist, as it is often the case for Blossom Inequalities (note that there actually exist $2^{|V|-1}$ such parity constraints).
We will now describe the modifications, and later discuss why the conditions are often satisfied for solutions
of the linear relaxation \eqref{LPGraphicTSPObjectiveFirst}--\eqref{LPGraphicTSPBounds}.

\begin{lemma}
  \label{TheoremSingleParityCheat}
  Let $n \in \N$, and $z \in [0,n]$.
  Then the maximum value of $\min(f(x),1-f(x))$
  over all $x \in \Pord{n}$ with $z = \sum_{i=1}^n x_i$
  is equal to the minimum of $z$, $n-z$, and $\frac{1}{2}$.
\end{lemma}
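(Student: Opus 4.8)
The plan is to write $g(x):=\min(f(x),1-f(x))$ and to split the claimed identity into an upper bound valid for every feasible $x$ and a matching construction. For the upper bound I will use three elementary observations. First, $g(x)\le\tfrac12$ always, since $\min(t,1-t)\le\tfrac12$ for every $t\in\R$. Second, $g(x)\le f(x)$, and the first displayed grouping of $f$ (from before Theorem~\ref{TheoremOrderedParityOuterDescriptionEven}), namely $f(x)=(x_1-x_2)+(x_3-x_4)+\dotsb$ with all blocks nonnegative on $\Pord{n}$, gives $f(x)\le x_1+x_3+x_5+\dotsb\le\sum_{i=1}^n x_i=z$, hence $g(x)\le z$. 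Third, for the bound $g(x)\le n-z$ I pass to the reflected-complemented vector $y$ defined by $y_i:=1-x_{n+1-i}$: one checks $y\in\Pord{n}$, $\sum_i y_i=n-z$, and a short computation shows $f(y)=f(x)$ if $n$ is even and $f(y)=1-f(x)$ if $n$ is odd, so in either case $g(y)=g(x)$; applying the second observation to $y$ then yields $g(x)=g(y)\le f(y)\le\sum_i y_i=n-z$. Combining the three bounds, $g(x)\le\min(z,n-z,\tfrac12)$ for every admissible $x$.

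For the reverse inequality I exhibit a feasible $x$ attaining the bound. The map $x\mapsto y$ above is an affine involution carrying $\setdef{x\in\Pord{n}}[\sum_i x_i=z]$ onto the analogous set for $n-z$ and preserving $g$, so the optimal value is symmetric under $z\mapsto n-z$ and I may assume $z\le n/2$; then $\min(z,n-z,\tfrac12)=\min(z,\tfrac12)$ (the case $n=0$ being trivial, so $n\ge1$). Recall that $\Xord{n}=\setdef{v^{(0)},\dotsc,v^{(n)}}$, where $v^{(\ell)}$ has $\ell$ leading ones, and that $f(v^{(\ell)})$ equals $1$ if $\ell$ is odd and $0$ if $\ell$ is even. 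If $z\le\tfrac12$, I take $x:=z\,v^{(1)}+(1-z)\,v^{(0)}=(z,0,\dotsc,0)$; then $\sum_i x_i=z$ and $f(x)=z$, so $g(x)=z$. If $\tfrac12<z\le n/2$ (which forces $n\ge2$), I take $x:=\tfrac12 v^{(1)}+\nu\,v^{(2\lfloor n/2\rfloor)}+(\tfrac12-\nu)\,v^{(0)}$ with $\nu:=\tfrac{2z-1}{4\lfloor n/2\rfloor}$, where the inequalities $\tfrac12<z\le n/2$ are exactly what makes $\nu\in[0,\tfrac12]$; a direct check gives $\sum_i x_i=z$ and $f(x)=\tfrac12$, hence $g(x)=\tfrac12$. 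In both subcases $g(x)=\min(z,\tfrac12)$, matching the upper bound, which proves the lemma.

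The only genuinely delicate point is the computation of $f(y)$ for the reflected vector $y$, where the bookkeeping depends on the parity of $n$: this parity decides whether the last term of $f$ is paired in the telescoping or left over, and hence which of $f(x)$, $1-f(x)$ one recovers. Everything else reduces either to grouping the alternating sum into nonnegative blocks or to writing down an explicit convex combination of the vectors $v^{(\ell)}$ and checking its coordinate sum. As an alternative one could rerun the proof of Theorem~\ref{TheoremOrderedParityOuterDescriptionEven} with the roles of ``odd'' and ``even'' interchanged, but exploiting the reflection symmetry $x\mapsto y$ keeps the case analysis to a minimum.
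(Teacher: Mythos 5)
Your proof is correct, and it reaches the result by a genuinely different route than the paper. The paper argues by a three-way case distinction on where $\min(z,n-z,\tfrac12)$ is attained: in its Case~2 it derives the bound $g(x)\le n-z$ by a direct parity-dependent estimate ($f(x)\le(2-x_1)-x_2+\dotsb$ for even $n$, and a similar one for $1-f(x)$ when $n$ is odd), and in its Case~3 it exhibits explicit fractional vectors such as $(1,\dotsc,1,\tfrac12,z_k^-,z_k^-,0,\dotsc,0)$ whose coordinate sum and $f$-value must be verified by hand. You instead introduce the reflection--complementation involution $y_i:=1-x_{n+1-i}$, which maps the feasible set for $z$ onto that for $n-z$ and satisfies $f(y)\in\{f(x),1-f(x)\}$; this both delivers the $n-z$ upper bound as a corollary of the $z$ upper bound and lets you assume $z\le n/2$ in the attainment step, collapsing the paper's Cases~2, 3(a) and 3(b) into one. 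Your attaining points are written as explicit convex combinations of the vertices $v^{(0)},v^{(1)},v^{(2\lfloor n/2\rfloor)}$ of $\Pord{n}$, so membership in $\Pord{n}$, the coordinate sum, and $f(x)=\tfrac12$ all follow by linearity from $f(v^{(\ell)})=\ell\bmod 2$ rather than by coordinate-wise inspection (I checked that $\nu=\tfrac{2z-1}{4\lfloor n/2\rfloor}\in[0,\tfrac12]$ is exactly equivalent to $\tfrac12\le z\le n/2$ for both parities of $n$, so the combination is indeed convex). The trade-off: the paper's proof is self-contained coordinate bookkeeping, while yours front-loads one symmetry computation ($f(y)=f(x)$ for even $n$, $f(y)=1-f(x)$ for odd $n$) and in exchange shortens the case analysis and makes the verifications mechanical; given that the acknowledgements mention a referee-found error in this very proof, the vertex-combination formulation is arguably the more robust presentation.
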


\begin{proof}
  We denote the maximum in the lemma by $\gamma^*$, and by definition we have $\gamma^* \in [0,\frac{1}{2}]$.
  We distinguish three cases, depending on where the minimum of $z$, $n-z$, and $\frac{1}{2}$ is attained.

  \medskip
  \noindent
  \textbf{Case 1}: $z < \frac{1}{2}$.
  Clearly, since $f(x) = x_1 - x_2 + x_3 - \dotsb \pm x_n \leq x_1 + x_2 + x_3 + \dotsb + x_n = z$
  holds, we have $\gamma^* \leq z$.
  This value is obtained by $x = (z,0,\dotsc,0,0) \in \Pord{n}$ which satisfies $\sum_{i=1}^n x_i = z$ and $f(x) = z$.

  \medskip
  \noindent
  \textbf{Case 2}: $n-z < \frac{1}{2}$.
  This implies $x_i \geq \frac{1}{2}$ and hence $x_i \leq 2 - x_i$ for all $i \in [n]$.
  For even $n$ we obtain $f(x) \leq (2-x_1) - x_2 + (2-x_3) - x_4 + \dotsb + (2-x_{n-1}) - x_n \leq n - z$.
  For odd $n$, $1 - f(x) \leq 1 - ( x_1 - (2-x_2) + x_3 - (2-x_4) + \dotsb + x_{n-1} ) \leq n - z$.
  Together, this
  proves $\gamma^* \leq n-z$.
  This value is obtained by $x = (1,1,\dotsc,1,z-n+1) \in \Pord{n}$ which satisfies $\sum_{i=1}^n x_i = z$.
  Depending on the parity of $n$ it also satisfies
  $f(x) = z-n+1$ or $1 - f(x) = z-n+1$, i.e., we have $\gamma^* = n-z$.

  \medskip
  \noindent
  \textbf{Case 3}: $\min(z,n-z) \geq \frac{1}{2}$.
  Since $\gamma^* \leq \frac{1}{2}$ always holds, it remains to construct a solution of that value.
  We define $k \in \N$ to be an integer whose distance to $z$ is at most $\frac{1}{2}$,
  breaking ties such that $k$ is not equal to $0$ or equal to $n$ (but arbitrary otherwise).
  We furthermore define the values $z^-_k := \frac{2z - 2k + 1}{4} \in [0,\frac{1}{2}]$, and $z^+_k := \frac{2z - 2k + 3}{4} \in [\frac{1}{2}, 1]$.
  We again consider two cases, depending on $k$:

  \medskip
  \noindent
  \textbf{Case 3 (a)}: $1 \leq k \leq n-2$.
  We consider the vector $x$ defined via
  $$x_i := \begin{cases}
    1           & \text{ for } i \leq k-1 \\
    \frac{1}{2} & \text{ for } i = k \\
    z^-_k       & \text{ for } i \in \setdef{k+1,k+2} \\
    0           & \text{ for } i \geq k+3
  \end{cases},$$
  and observe that $x \in \Pord{n}$ holds, and that the first $(k-1)$ indices have $1$'s.
  Furthermore, we have that $\sum_{i=1}^{n} x_i = (k-1) + \frac{1}{2} + 2z^-_k = k-1 + \frac{1}{2} + \frac{ 2z - 2k + 1 }{ 2 } = z$
  and $f(x) = \frac{1}{2}$ hold.

  \medskip
  \noindent
  \textbf{Case 3 (b)}: $k = n-1$.
  We consider the vector $x = (1,1,\dotsc,1,z^+_{n-1},z^+_{n-1},\frac{1}{2})$
  and observe that $x \in \Pord{n}$ holds.
  Again, we have that $\sum_{i=1}^{n} x_i = (n-3) + 2z^+_{n-1} + \frac{1}{2} = n-3 + \frac{ 2z - 2(n-1) + 3 }{ 2 } + \frac{1}{2} = z$
  and $f(x) = \frac{1}{2}$ hold. This concludes the proof.
\end{proof}

Motivated by Lemma~\ref{TheoremSingleParityCheat} we define $\gamma(z) := \min(z,r-z,\frac{1}{2})$ for a variable $z \in [0,r]$.
In the following theorem we binarize a set of variables and consider parity constraints on subsets of them.

\begin{theorem}
  \label{TheoremMultiParityCheat}
  Let $r \in \N^k$ and let $z \in [0,r_1] \times \dotsb \times [0,r_k]$.
  Let $\mathcal{I}$ be a family of subsets $I \subseteq [k]$.
  If every $I \in \mathcal{I}$ satisfies $\sum_{i \in I} \gamma(z_i) \geq 1$,
  then there exist $x^{(i)} \in \Pord{r_i}$ with $z_i = \sum_{j=1}^{r_i} x^{(i)}_j$ for all $i \in [k]$
  such that for every $I \in \mathcal{I}$ the vector $(x^{(i)})_{i \in I}$
  is contained in the even and odd parity polytopes corresponding to $(r_i)_{i \in I}$.
\end{theorem}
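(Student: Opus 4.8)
The plan is to build the vectors $x^{(i)}$ one group at a time, using Lemma~\ref{TheoremSingleParityCheat} to make $\min(f(x^{(i)}),1-f(x^{(i)}))$ as large as possible, and then to observe that this single quantity already controls membership in \emph{both} parity polytopes simultaneously. Concretely, for each $i \in [k]$ I would invoke Lemma~\ref{TheoremSingleParityCheat} to pick a vector $x^{(i)} \in \Pord{r_i}$ with $z_i = \sum_{j=1}^{r_i} x^{(i)}_j$ and $\min(f(x^{(i)}),1-f(x^{(i)})) = \gamma(z_i)$; such a vector exists by the lemma, and crucially its choice depends only on $z_i$ and $r_i$, not on $\mathcal{I}$ nor on any particular violated inequality. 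Set $\lambda_i := f(x^{(i)}) \in [0,1]$. It then remains to verify that for every $I \in \mathcal{I}$ the subvector $(x^{(i)})_{i \in I}$, which automatically lies in $\prod_{i \in I} \Pord{r_i}$, belongs to both $\Peven{(r_i)_{i \in I}}$ and $\Podd{(r_i)_{i \in I}}$.

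The key step is the reduction of this joint membership to a single scalar inequality. By Theorem~\ref{TheoremOrderedParityOuterDescriptionEven} the even polytope is cut out by Inequalities~\eqref{InequalityOrderedParity} for odd-cardinality $F \subseteq I$, and by Corollary~\ref{TheoremOrderedParityOuterDescriptionOdd} the odd polytope is cut out by the same inequalities for even-cardinality $F \subseteq I$ (in particular for $F = \emptyset$); hence a point of $\prod_{i \in I}\Pord{r_i}$ lies in the intersection of the two polytopes if and only if it satisfies \eqref{InequalityOrderedParity} for \emph{every} $F \subseteq I$. For a fixed $F$ the left-hand side of \eqref{InequalityOrderedParity} is $\sum_{i \in I \setminus F}\lambda_i + \sum_{i \in F}(1-\lambda_i)$, a sum of independent per-index terms, so its minimum over all $F \subseteq I$ equals $\sum_{i \in I}\min(\lambda_i,1-\lambda_i)$, attained by $F = \setdef{i \in I}[\lambda_i > \tfrac12]$. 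By the choice of the $x^{(i)}$ this minimum is $\sum_{i \in I}\gamma(z_i)$, which is $\geq 1$ by assumption; therefore every inequality \eqref{InequalityOrderedParity} holds for $I$, and $(x^{(i)})_{i \in I}$ lies in both parity polytopes, as required.

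I do not anticipate a genuine obstacle: the whole argument rests on the observation that ``lies in the even \emph{and} odd parity polytope'' is equivalent to ``$\sum_{i\in I}\min(\lambda_i,1-\lambda_i) \geq 1$'', after which the theorem is an immediate consequence of Lemma~\ref{TheoremSingleParityCheat}. The only points needing a little care are the alignment of the two outer descriptions (making sure the even-$F$ and odd-$F$ constraints together give ``all $F$'', including the trivial case $F = \emptyset$, which gives $\sum_{i \in I}\lambda_i \geq 1$) and the fact, used implicitly, that the hypothesis $\sum_{i \in I}\gamma(z_i) \geq 1$ together with $\gamma(\cdot) \leq \tfrac12$ forces $|I| \geq 2$ for each $I \in \mathcal{I}$, so nothing degenerate can occur. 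This also makes precise the phenomenon discussed before the lemma: the repair of the binarization variables is performed independently per group and is oblivious both to which parity constraint was violated and to its parity.
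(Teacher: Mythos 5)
Your proposal is correct and follows essentially the same route as the paper: choose each $x^{(i)}$ as the maximizer from Lemma~\ref{TheoremSingleParityCheat}, and then observe that for any $I \in \mathcal{I}$ and any $F \subseteq I$ the left-hand side of Inequality~\eqref{InequalityOrderedParity} is bounded below by $\sum_{i \in I} \min\bigl(f(x^{(i)}),1-f(x^{(i)})\bigr) = \sum_{i \in I} \gamma(z_i) \geq 1$, which by Theorem~\ref{TheoremOrderedParityOuterDescriptionEven} and Corollary~\ref{TheoremOrderedParityOuterDescriptionOdd} gives membership in both parity polytopes. Your additional remarks (covering $F=\emptyset$ via the odd-polytope description, and noting $|I|\geq 2$) are fine but not needed beyond what the paper already does.
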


\begin{proof}
  Let $x^{(i)}$ be the maximizer from Lemma~\ref{TheoremSingleParityCheat} for all $i \in [k]$,
  that is, $\min(f(x^{(i)}), 1-f(x^{(i)})) = \gamma(z_i)$ holds for every $i \in [k]$.
  Now consider one of the sets $I \in \mathcal{I}$ and any subset $F \subseteq I$.
  From
  \begin{gather*}
    1 \leq \sum_{i \in I} \gamma(z_i) = \sum_{i \in I} \min(f(x^{(i)}),1-f(x^{(i)})) \leq \sum_{i \in I \setminus F} f(x^{(i)}) + \sum_{i \in F} (1-f(x^{(i)}))
  \end{gather*}
  we obtain that Inequality~\eqref{InequalityOrderedParity} is satisfied, which concludes the proof.
\end{proof}

The theorem essentially states sufficient conditions for the case that after binarization
and enforcing of several parity constraints, the values of the original variables remain feasible.
Note that in order to satisfy $\sum_{i \in I} \gamma(z_i) \geq 1$ for a constraint on variable set $I$,
it suffices that two of the participating variables have a distance to their respective bounds of at least $\frac{1}{2}$ (see Lemma~\ref{TheoremSingleParityCheat}),
which is not very restrictive for nonbinary variables.

\bigskip

\paragraph{Implications for the Graphic TSP.}
We consider an optimum solution $\hat{z} \in [0,2]^E$ of the LP relaxation~\eqref{LPGraphicTSPObjectiveFirst}--\eqref{LPGraphicTSPBounds} 
of the model introduced in Section~\ref{SectionGTSP}.
The requirements for Theorem~\ref{TheoremMultiParityCheat}
are satisfied if and only if for every nontrivial cut $\GcutEdges(S)$ ($S \subsetneqq V$, $S \neq \emptyset$),
the inequality $\sum_{e \in \GcutEdges(S)} \gamma(\hat{z}_e) \geq 1$ is satisfied.
This is in particular the case if every cut contains two edges $e,f$ with $\frac{1}{2} \leq \hat{z}_e,\hat{z}_f \leq \frac{3}{2}$.
Note that integrality of the $z$-variables does not play a role here, i.e., even if $\hat{z}$ is integral, and $\hat{z}(\GcutEdges(S))$ has the wrong
parity for some set $S$, then there may exist a (fractional) assignment for $x$-variables that is feasible for
Constraints~\eqref{InequalityBlossomStrengthened}.
For the IP model this means that the Blossom Inequalities only become useful if relevant $z$-variables are near their bounds
or if branching or cutting restricted the $x$-variables further.

\bigskip

\paragraph{Acknowledgements.}
We thank the author of the \texttt{lrs} software~\cite{Avis00} that we used to
compute complete descriptions for ordered parity polytopes of small dimensions.
Furthermore, we thank Stefan Weltge and Volker Kaibel for valuable discussions.
Finally, we are very grateful to the two referees who found a mistake in the proof of Lemma~\ref{TheoremSingleParityCheat} and whose comments led to improvements in the presentation of the material.

\bibliographystyle{plainnat}
\bibliography{references}

\end{document}